\newtheorem{lemma}{Lemma}
  \providecommand\BibTeX{{%
    \normalfont B\kern-0.5em{\scshape i\kern-0.25em b}\kern-0.8em\TeX}}}
\begin{document}

\title{Decentralized Collaborative Learning with Adaptive Reference Data for On-Device POI Recommendation}

\author{Ruiqi Zheng}
\email{12150084@mail.sustech.edu.cn}
\affiliation{%
  \institution{Southern University of Science and Technology}
  \city{Shenzhen}
  \country{China}
}

\author{Liang Qu}
\email{liang.qu@uq.edu.au}
\affiliation{%
  \institution{The University of Queensland}
  \city{Brisbane}
  \country{Australia}
}

\author{Tong Chen}
\email{tong.chen@uq.edu.au}
\affiliation{%
  \institution{The University of Queensland}
  \city{Brisbane}
  \country{Australia}
}

\author{Lizhen Cui}
\email{clz@sdu.edu.cn}
\affiliation{%
  \institution{Shandong University}
  \city{Jinan}
  \country{China}
}

\author{Yuhui Shi}
\authornote{Corresponding Authors}
\email{shiyh@sustech.edu.cn}
\affiliation{%
  \institution{Southern University of Science and Technology}
  \city{Shenzhen}
  \country{China}
}

\author{Hongzhi Yin}
\authornotemark[1]
\email{db.hongzhi@gmail.com}
\affiliation{%
  \institution{The University of Queensland}
  \city{Brisbane}
  \country{Australia}
}


\begin{abstract}

In Location-based Social Networks (LBSNs), Point-of-Interest (POI) recommendation helps users discover interesting places. There is a trend to move from the conventional cloud-based model to on-device recommendations for privacy protection and reduced server reliance. Due to the scarcity of local user-item interactions on individual devices, solely relying on local instances is not adequate. Collaborative Learning (CL) emerges to promote model sharing among users. Central to this CL paradigm is reference data, which is an intermediary that allows users to exchange their soft decisions without directly sharing their private data or parameters, ensuring privacy and benefiting from collaboration. While recent efforts have developed CL-based POI frameworks for robust and privacy-centric recommendations, they typically use a single and unified reference for all users. Reference data that proves valuable for one user might be harmful to another, given the wide range of user preferences. Some users may not offer meaningful soft decisions on items outside their interest scope. Consequently, using the same reference data for all collaborations can impede knowledge exchange and lead to sub-optimal performance. To address this gap, we introduce the Decentralized Collaborative Learning with Adaptive Reference Data (DARD) framework, which crafts adaptive reference data for effective user collaboration. It first generates a desensitized public reference data pool with transformation and probability data generation methods. For each user, the selection of adaptive reference data is executed in parallel by training loss tracking and influence function. Local models are trained with individual private data and collaboratively with the geographical and semantic neighbors. During the collaboration between two users, they exchange soft decisions based on a combined set of their adaptive reference data. Our evaluations across two real-world datasets highlight DARD's superiority in recommendation performance and addressing the scarcity of available reference data.

\end{abstract}

\begin{CCSXML}
<ccs2012>
   <concept>
       <concept_id>10002951.10003317.10003347.10003350</concept_id>
       <concept_desc>Information systems~Recommender systems</concept_desc>
       <concept_significance>500</concept_significance>
       </concept>
 </ccs2012>
\end{CCSXML}


\keywords{Point-of-Interest Recommendation, Decentralized Collaborative Learning, Model-Agnostic}

\maketitle
\vspace{-5pt}
\section{Introduction}

In recent years, location-based social networks (LBSNs) like Yelp, and Foursquare \cite{2020Will} have become more significant in e-commerce \cite{2020Geography, 2021STAN}. Within LBSNs, users share their physical locations and experiences with check-in data. Utilizing this check-in data with Point-of-Interest (POI) recommendations \cite{yin2015joint, yin2017spatial} is essential to aid users in discovering new POIs, and enhance location-based services, such as mobile advertisements \cite{2021Discovering}.
It is common practice to rely on a powerful cloud server, which not only hosts all user data but also manages the training and inference of the recommendation model \cite{2014GeoMF,zhang2021graph}. However, this approach raises user privacy concerns since personal check-in histories are frequently shared with service providers. Such centralized recommendations are at risk of violating new privacy regulations (e.g., the General Data Protection Regulation (GDPR)\footnote{https://gdpr-info.eu/}). Additionally, the system's reliance on server capabilities and stable internet connections \cite{2020Next} compromises the service reliability of cloud-based POI recommendations.

This has driven on-device POI recommendation systems \cite{2022Decentralized,2020Next,2021PREFER, yin2024ondevice}, which deploy
models to edge devices (e.g., smartphones and smart cars), enabling recommendations to be generated locally with minimal reliance on centralized resources. Considering the limited memory capacities of devices in contrast to abundant cloud servers, there's an essential push towards memory compression techniques \cite{2013Estimating,2015Compressing,2015Distilling}. One popular strategy is on-device deployment \cite{2015Distilling, 2015Compressing}, including embedding quantization \cite{xia2023efficient} and the ``student-teacher'' framework \cite{2020Next}, which deploys the same compact model derived from a sophisticated cloud model for all user devices. However, such techniques neglect diverse user interests, and the varied capabilities of devices. As an alternative, on-device learning \cite{2022Decentralized,2020Distributed} actively involves each user in the model training process, which utilizes the computational power of devices and crafts personalized models tailored to individual user preferences and device constraints. Given the scarcity of local user-item interactions, rather than solely relying on local instances, Collaborative Learning (CL) \cite{2021PREFER, 2022YE} has been introduced for on-device POI recommendation, promoting model sharing between users, and avoiding the need for complex cloud-based model designs. 

Collaborative Learning (CL) for recommendation has been segmented into two approaches: centralized CL-based recommendation represented by federated learning-based methods \cite{2021PREFER,liang2021fedrec,yuan2023interaction,wang2022fast}, and decentralized CL-based recommendation \cite{2022Decentralized,2022YE}. In the federated recommendation, users train their models locally for data privacy, and a central server is continuously engaged to aggregate these models to counter data sparsity, and then distribute an aggregated version back to users. On the other hand, for decentralized CL-based recommendations, the central server's primary role is limited to an initial phase, providing users with pretrained model parameters, and grouping similar users. Subsequently, these user-specific models are refined through a combination of local training and communication with nearby users within the same group \cite{2022YE}. Intra-group collaboration could be facilitated by directly exchanging gradients or raw model parameters between users, assuming identical local model architectures \cite{2021PREFER,liang2021fedrec}. However, this approach compromises communication efficiency and user privacy by revealing users' sensitive data. Furthermore, the assumption of model homogeneity significantly limits the applicability of such decentralized POI recommendation methods. In real-world scenarios, on-device models often require unique structures tailored to individual device capacities (e.g., memory budget and computation resources) \cite{2020Next}. Therefore, a dataset called "reference data" is introduced as an intermediary for user collaboration across heterogeneous models. Reference data serves as a set of data points, aiding in communication among users. Essentially, it functions as a medium that allows users to exchange their soft decisions without directly sharing their private data. This indirect method of knowledge exchange through reference data preserves user privacy while still benefiting from collaborative model refinement. Users interpret and respond to the reference data based on their individual preferences and histories, enabling a collaborative and personalized learning experience.



However, existing decentralized CL-based recommendation systems \cite{2022YE,long2023model,2020Distributed} commonly utilize a uniform reference dataset for all users, often neglecting spatial dynamics and user preference diversity. Prompted by the underlying question — do different users necessitate tailored reference datasets? We undertake a preliminary investigation using the Foursquare dataset \cite{2020Will}. The experimental setting is introduced in Appendix \ref{ap:pre}. Following \cite{long2023model}, we derive the desensitized public candidate pool from the private check-in sequence and compose different sets of reference data. These reference data are chosen at random, based on POI popularity, or adaptively selected for individual users. For the collaboration between two users, two users will share and align their soft decisions on the joint set of their selected reference data. Figure \ref{fig:intro} depicts the performance for a typical POI recommendation system STAN \cite{2021STAN} in the CL paradigm, measured via the hit ratio, when these diverse reference dataset selections are employed. Two clear conclusions can be made:
(i) Adapting the same original candidate pool for all users is sub-optimal. Users' local models are unlikely to provide accurate predictions on distant items or items out of their interest, which runs the risk of introducing noise, thereby hampering the recommendation quality. 
(ii) The selection of reference data markedly impacts recommendation results. Both the adaptive reference data and data stemming from popular POI items demonstrate better performance than random selection. It becomes apparent that data closely aligned with user preferences enhances the model's effectiveness. Unlike traditional CL tasks, such as decentralized computer vision model training \cite{vanhaesebrouck2017decentralized,reisizadeh2019robust} that employs a shared public reference dataset, POI recommendation presents unique challenges. In conventional tasks, each local model shares a uniform data distribution (e.g., flowers) and pursues a common objective (e.g., classifying flower types). However, in POI recommendations, users exhibit varied data distributions due to their distinct interests  \cite{2012Factorization}. The aim of each model is to deliver personalized recommendations for its specific user. Consequently, reference data that proves valuable for one user might be harmful to another, given the wide range of user preferences. Some users may not offer meaningful soft decisions on items outside their interest scope, emphasizing the vital need for a CL recommendation paradigm with an adaptive reference dataset.

\begin{figure}[t]
\centering
\includegraphics[width=0.3\textwidth]{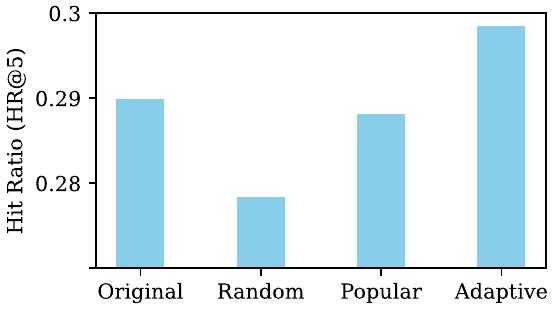} 
\vspace{-1.2em}
\caption{Impact of reference data selection on recommendation performance. "Original" refers to no selection, whole reference data candidate pool for all users.}
\label{fig:intro}
\vspace{-1.5em}
\end{figure}

To this end, we introduce a novel framework: Decentralized Collaborative Learning with Adaptive Reference Data (DARD). It involves a server that identifies user clusters based on similarities, subsequently enabling collaborative learning among neighbors. The core of this adaptive approach is an innovative Knowledge Distillation (KD) mechanism, designed to facilitate seamless knowledge exchange via soft decisions on a combined set of their adaptive reference data. 
Notably, the adoption of soft decisions on adaptive reference data avoids the traditional requirement that models within the CL paradigm remain identical, paving the way for greater real-world applicability, where varying devices (e.g., mobile phones or Internet-of-Things \cite{zheng2023personalized}) possess diverse memory capacities or model architectures. Concerning the availability of public reference data and user privacy, many current methodologies directly allocate a portion of private historical check-in data for public use. To address this, we utilize two check-in data generation methods: transformation and probability generation, ensuring robust reference data candidate pools while safeguarding user privacy.

A straightforward approach to identifying this adaptive reference data from the public reference data candidate pool would entail retraining the CL paradigm using all possible combinations from a public reference dataset. Such an approach is clearly infeasible due to its computational demands. To resolve this, we first track individual training loss to delete the noisy instances with large losses from the candidate pool during the training for every user and then utilize the influence function to estimate the influence of each instance after convergence. adaptive reference data for individual users are filtered out on the devices in parallel. Subsequently, the model is retrained within the CL paradigm using the adaptive reference data. In contrast to exhaustive retraining, DARD offers a one-time selection of optimal adaptive reference data.

The contributions of this paper are summarized as follows:
\begin{itemize}
    \item The introduction of the DARD framework, designating heterogeneous models to engage in knowledge exchange with soft decisions on adaptive reference data. It accommodates different on-device recommenders, tailors for distinctive user preferences, and lessens the dependency on servers.
    \item The introduction of the training loss tracking and influence function on the devices to select adaptive reference data for individual users.
    \item An extensive evaluation of DARD with  real-world datasets, showing its effectiveness in recommendation performance, and addressing the scarcity of available reference data.
\end{itemize}

\vspace{-5pt}

\section{Related Work}
This section will review the main related works including on-device recommender systems and decentralized collaborative learning (CL) recommender systems.
\vspace{-8pt}
\subsection{On-device Recommender Systems}

Unlike cloud-based recommendations \cite{zhao2010user,koren2009matrix,ramlatchan2018survey}, where models operate on the cloud, on-device recommender systems transfer the model processing from the cloud directly to the device. This paradigm encompasses two main approaches: (1) On-device deployment \cite{2020Next, xia2023efficient}: model is fully trained on the cloud server and subsequently deployed solely on the device. Techniques like embedding quantization \cite{xia2023efficient} and the "student-teacher" framework \cite{2020Next} deploy a compact version of a sophisticated cloud model across all user devices. (2) On-device learning \cite{2020Distributed,2021PREFER}: model is trained directly on the device. Given the limited local user-item interactions, Collaborative Learning (CL) \cite{vanhaesebrouck2017decentralized,reisizadeh2019robust} is often utilized. CL-based methods can be split into centralized and decentralized techniques. In the centralized CL, federated recommendation \cite{2021PREFER,liang2021fedrec,yuan2023hetefedrec,zhang2023comprehensive} has gained prominence: users train their local models while a central server aggregates these models to address data sparsity and then redistributes the combined model to users.  In contrast, decentralized CL \cite{2022YE,long2023model,2020Distributed} limits the central server's role to the initial phase, supplying users with pretrained model parameters. These user-specific models are then enhanced by a mix of local training and interaction with users within the same cluster.


\begin{figure*}[t]
\centering
\includegraphics[width=1\textwidth]{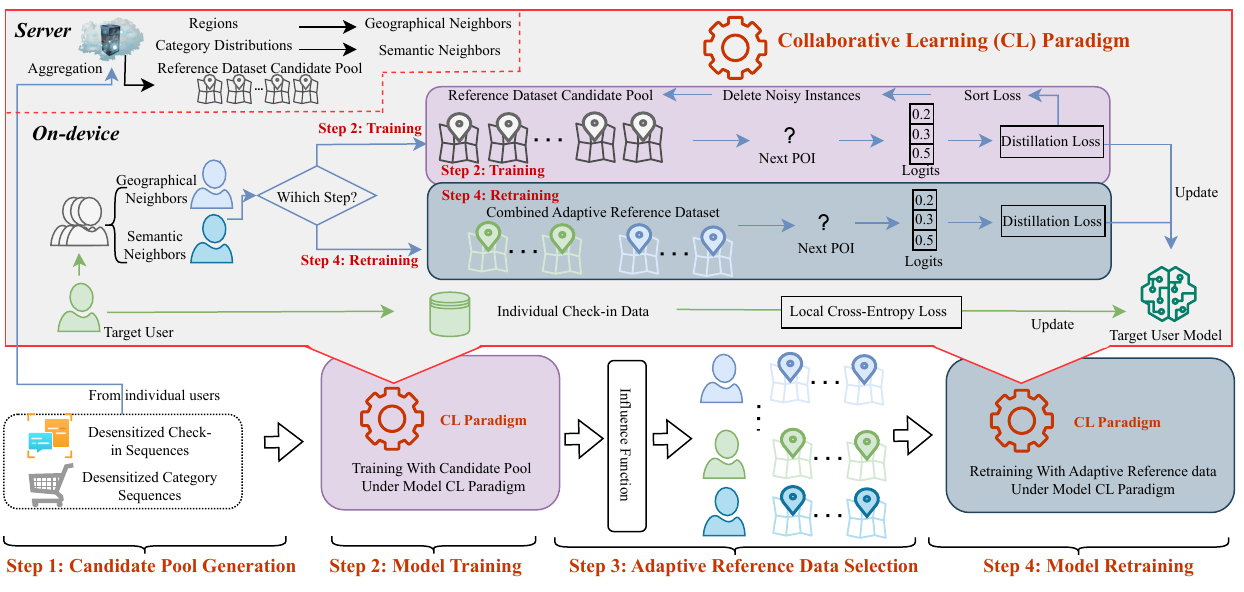} 
\vspace{-2.5em}
\caption{The overview of DARD. a) Step 1: generate desensitized sequences on-device and upload to server to aggregate as candidate pool. Server only involves in the initial stage to deploy pool and defines neighbors for the user. b) Model CL paradigm: user models are trained with individual data and collaboratively with neighbors. c) Step 2: train under CL paradigm with candidate pool and track loss to delete noisy reference data instances for target user. d) Step 3: utilize influence function to select adaptive reference data. e) Step 4: retrain under CL paradigm with adaptive reference data.}
\label{fig:overview}
\vspace{-0.5em}
\end{figure*}
                                                           
\vspace{-10pt}
\subsection{Decentralized Collaborative Learning (CL) Recommender Systems}

Decentralized CL-based recommendations \cite{chen2018privacy,2022Decentralized,2020Distributed,2022YE,long2023model} depend on the server only during the initial stage, primarily to obtain pretrained models and to determine the neighbor set. Following this, user-specific models are first trained locally using private data, and then collaboratively with neighbors. For knowledge exchange, raw parameters or gradients are shared directly \cite{ chen2018privacy,2022Decentralized}. With a heightened emphasis on privacy and ensuring collaboration between heterogeneous models, reference data is introduced as an intermediary. Instead of direct parameter exchange, users only share their soft decisions made on this reference data. D-Dist \cite{2020Distributed} allows local models to communicate with randomly chosen heterogeneous neighbors, and SQMD \cite{2022YE} determines neighbors based on shared responses to the reference dataset. MAC \cite{long2023model} prioritizes communication via a public reference dataset and prunes non-essential neighbors during training. Since the reference data facilitates user collaboration and knowledge exchange, the selection of reference data is vital. However, existing methods neglect it, and utilize one single reference data for all users.


\vspace{-5pt}

\section{Preliminary}
In this section, we first introduce some important definitions frequently used in POI recommendation and problem formulation.

\textbf{Definition 1: Check-in Sequence} A check-in sequence $\mathcal{X}(u_j) = \{p_1,p_2, \cdot\cdot\cdot, p_{n_j}\} $ contains $n_j$ chronologically visited Point-of-Interet (POI) by the user $u_j \in \mathcal{U}$, where POI data $p \in \mathcal{P}$.
 
\textbf{Definition 2: Category Sequence} For user $u_j$, a category sequence $\mathcal{X}^{c}(u_j) = \{c_{p_1},c_{p_2}, \cdot\cdot\cdot, c_{p_{n_j}}\} $ includes the corresponding category $c_{p_j}$ for POI  $p_j$, and $\mathcal{C}$ represents the set for all categories.
    
\textbf{Definition 3: Geographical Segment (Region)}. A region $r$ represents a geospatial division of POIs. Following \cite{2022Decentralized,long2023model}, we derive a collection of regions $\mathcal{R}$ by performing $k$-means clustering \cite{1967Some} on the geographical coordinates of all POIs in this study.

\textbf{Definition 4: Public Reference Candidate Pool}. A public reference candidate pool $\mathcal{D} = \mathcal{D}^{g} \cup \mathcal{D}^{s}$, comprises both geographical reference data $\mathcal{D}^{g}$ and semantic reference data $\mathcal{D}^{s}$. These cater to collaborative processes among geographical and semantic neighbors, respectively. The methodologies for formulating $\mathcal{D}^{g}$ and $\mathcal{D}^{s}$ are elucidated in Section \ref{sec:data}. Concerning a particular user $u_j$, an adaptive reference data, designated as $\hat{\mathcal{D}}(u_j) \subseteq \mathcal{D}$, is selected from $\mathcal{D}$ and  discussed in Section \ref{sec:personalized}.

\textbf{Problem
Formulation: Decentralized CL POI Recommendation with Adaptive Reference Data}. In the DRAD framework, we assign specific roles to the devices/users and the central server as detailed below:
\begin{itemize}
\item \textbf{Device/User Role}: An individual user $u_j$ possesses her distinct check-in sequences $\mathcal{X}(u_j)$, category sequences $\mathcal{X}^c(u_j)$, and a tailor-made model $\phi_j(\cdot)$ that is collectively trained using local data and further refined via user collaborations. For efficient storage,  model $\phi_j(\cdot)$ only retains embeddings corresponding to POIs within regions the user has visited or is currently in $r\in \mathcal{R}(u_j)$. $\mathcal{D}$ is stored on the device and then replaced by adaptive reference data $\mathcal{\hat{D}}(u_j)$. 
\item \textbf{Server Role}: The server's main task is to determine neighbor sets for all users with the collected low-sensitivity data and to generate the reference data candidate pool. Once the server sends this data to users, it remains uninvolved during the subsequent local model training phase.
\end{itemize}
The adaptive reference data $\hat{\mathcal{D}}(u_j)$ is selected on each user $u_j$ device from the public reference data pool $\mathcal{D}$. The on-device model $\phi_j(\cdot)$ is then leveraged to predict a prioritized list of potential POIs for the user's next movement.

\vspace{-5pt}

\section{DARD}
This section introduces DARD framework, where an overview is depicted in Figure \ref{fig:overview}. The main components include: (1) Reference data candidate pool generation. (2) Model collaborative learning paradigm for both Step 2 model training and Step 4 model retraining. (3) Select adaptive reference data for an individual user based on its own training loss during training, and influence function after convergence. (4) Model retraining under the CL paradigm with adaptive reference data.
\vspace{-8pt}

\subsection{Reference Data Candidate Pool Generation} 
\label{sec:data}



Inspired by \cite{long2023model}, for each user $u_i$, there are two methods to generate non-sensitive check-in and categories sequences locally and upload them to the cloud server to aggregate the reference data candidate pool $\mathcal{D} = \mathcal{D}^g \cup \mathcal{D}^s $, where $\mathcal{D}^g (u_i) = \{\mathcal{X}_v\}_{v=1}^{V}$ is the geographic reference data and  $\mathcal{D}^s (u_i) = \{\mathcal{X}^{c}_z\}_{z=1}^{Z} $ is the semantic reference data. 


\textbf{Transformation Generation}: Instead of directly revealing users’ check-in sequences, for Sequence $\{p_1, \cdot\cdot\cdot, p_i,\cdot\cdot\cdot,p_n \}$ and  $\{p_2, \cdot\cdot\cdot, p_i,\cdot\cdot\cdot,p_m \}$ we generate the desensitized check-in sequence by exchanging sequences after the same POI (i.e., $p_i$) to form $\{p_1, \cdot\cdot\cdot, p_i,\cdot\cdot\cdot,p_m \}$ and  $\{p_2, \cdot\cdot\cdot, p_i,\cdot\cdot\cdot,p_n \}$.  For all these new sequences, their region(s) $r$ is defined based on the most frequently visited locations, excluding POIs outside region $r$ to ensure region-specific sequences. If users do not want to reveal any specific POIs, the probability generation method generates check-in sequences only with category-level sequences.

\textbf{Probability Generation}: For semantic neighbors who might be geographically distant, probability generation relies exclusively on statistics derived from users' non-sensitive category sequences, which are initiated by computing the conditional probabilities for all categories, informed by the aggregated category sequences of users. Specifically, $\mathcal{P}(c_n)= \{P(c_n|c_1),P(c_n|c_2),\cdot\cdot\cdot,P(c_n|c_{|C|})\}$ encapsulates all conditional probabilities for $c_n$. Each probability $P(c_n|c_m)$ is derived as:

\begin{equation}
  P(c_n|c_m)=\frac{count(c_n|c_m)}{\sum_{m'=1}^{|C|}count(c_n|c_{m'})},
\end{equation}

where $count(c_n|c_m)$ is the total number of $c_m$ following $c_n$. A generated category sequence $\mathcal{X}^{c}$ emerges by (1) randomly choosing a category as the initiator; and (2) iteratively determining subsequent categories based on the preceding category's conditional probabilities. 
Through iterative generation of numerous $\mathcal{X}^{c}$ sequences, a comprehensive semantic reference data candidate pool $\mathcal{D}^{s}$ encompassing all categories is created. Subsequently, for each specific region, we generate sequences of POIs aligned with the category sequence $\mathcal{X}^{c}\in \mathcal{D}^{s}$, further imposing a 5km distance constraint between consecutive POIs.

\vspace{-5pt}
\subsection{Collaborative Learning Paradigm}
\label{sec:cl}
DARD utilizes the general model collaborative learning (CL) paradigm both for training and retraining, as shown in Figure \ref{fig:overview}, where models are trained with private data on local cross-entropy loss and collaboratively with others on distillation loss. The difference is that, for reference data $\mathcal{D}(u_i)$ in the collaboration procedure, Step 2 training uses the candidate pool $\mathcal{D}$, but Step 4 retraining uses the adaptive reference data $\mathcal{\hat{D}}(u_i)$.

\vspace{-5pt}
\subsubsection{Local Cross-entropy Loss Function}
$\mathcal{X}(u_i)$ is the private check-in data, and the local loss function is defined as \cite{long2023model}:
\begin{equation}
  L_{loc}(u_i) = l\left(\phi_i\left(\mathcal{X}(u_i)\right),\mathcal{Y}(u_i)\right),
\end{equation}
where $l$ stands for the cross-entropy loss function \cite{goodfellow2016deep} in our case, and $\phi_i(\cdot)$ is the model for user $u_i$. The model is optimized through the local loss function on the private check-in data.

\subsubsection{Distillation Loss Function} 
For user $u_i$, $\mathcal{D}(u_i) = \mathcal{D}^{g}(u_i) \cup \mathcal{D}^{s}(u_i)$ consists geographical and semantic reference data. We employ the distillation loss to learn insights from soft decisions shared by geographical neighbors \(\mathcal{G}(u_i)\) and semantic neighbors \(\mathcal{S}(u_i)\).

\textbf{Collaboration with Geographical Neighbors}. For user $u_i$, the CL with geographical neighbors $u_j$ is achieved by reducing their difference in soft decisions over \(\mathcal{D}^{g}(u_i)  \cup \mathcal{D}^{g}(u_j)\). The distillation loss is measured as: 
\begin{equation}\label{eq:Lgeo}
  L_{geo} = \frac{1}{|\mathcal{G}(u_i)|}\sum_{u_j\in \mathcal{G}(u_i)}\bigg{(} \sum_{\mathcal{X} \in \mathcal{D}^{g}(u_i) \cup \mathcal{D}^{g}(u_j)}\left|\left|\phi_i
  \left(\mathcal{X}\right)-\phi_j(\mathcal{X})\right|\right|^2_2 \bigg{)},
\end{equation}
where \(\phi_i(\cdot)\) and \(\phi_j(\cdot)\) are the local recommendation models for \(u_i\) and her neighboring users.  $||\cdot||^2_2
$ is the normalization term.

\textbf{Collaboration with Semantic Neighbors}. Similar to the geographical neighbor collaboration, we align the soft decisions of \(u_i\) and her semantic neighbor $uj \in \mathcal{S}(u_i)$ on the join set of their semantic reference data \(\mathcal{D}^{s}(u_i)  \cup \mathcal{D}^{s}(u_j)\) with distillation loss: 
\begin{equation}\label{eq:Lsem}
  L_{sem} = \frac{1}{|\mathcal{S}(u_i)|}\sum_{u_j\in \mathcal{S}(u_i)}\bigg{(}\sum\limits_{\mathcal{X}^c \in \mathcal{D}^{s}(u_i) \cup \mathcal{D}^{s}(u_j)}\left|\left|\phi
_i
  \left(\mathcal{X}^c\right)-\phi
_j(\mathcal{X}^c)\right|\right|^2_2 \bigg{)},
\end{equation}

Therefore, the final loss function for the collaborative learning paradigm is the combination of local loss and collaboration loss, where $\gamma$ and $\mu$ control the preference for individual components:

\begin{equation}
  L_{total} = L_{loc} + \gamma L_{geo} + \mu L_{sem}.
\end{equation}



\subsection{Adaptive Reference Data Selection}
\label{sec:personalized}

In this section, we propose to track the loss function during the Step 2 training, and utilize the influence function after training in Step 3 to identify the harmful data instances and select adaptive reference data for individual users, which is conducted on the device side in parallel without the reliance on the cloud server. 

\subsubsection{Loss Tracking During Training}

For each user, we track the training loss during her collaboration with semantic (or geographical ) neighbors to identify harmful instances in the semantic (or geographical) reference data, respectively. Both geographical and semantic
data instances require refinement. While data selection for each
user occurs concurrently, we focus on the selection process for
a single user. For notation simplicity, we omit the neighbor notation and represent individual reference data with $\mathcal{D}$.

For a given $m^{th}$ mini-batch of training instances, symbolized as $\bar{\mathcal{D}}^{m} \in \mathcal{D}$. DARD processes every sample in $\bar{\mathcal{D}}^{m}$, and sequences them based on their training losses. Large loss instances are regarded as noisy, while their counterparts with small losses are tagged as "clean", denoted by $\bar{\mathcal{D}}^{m}_{+}$:
\begin{equation}
\bar{\mathcal{D}}_{+}^{m} = \arg \min_{\bar{\mathcal{D}^{m}}:|\bar{\mathcal{D}}^{m}| \geq \rho|\bar{\mathcal{D}}^{m}|} L(\bar{\mathcal{D}}^{m}, \theta),
\end{equation}
where $\theta$ is the set of the model of the target user and her neighbors,  $\rho$ is the selection ratio, and $L(\cdot)$ is the distillation loss $L_{geo}$ in Eq. \ref{eq:Lgeo} or $L_{sem}$ in Eq. \ref{eq:Lsem}, depending on neighbor types. For $m^{th}$ mini-batch, the noisy instances are recorded in the set $\bar{\mathcal{D}}_{\_}^{m} = \bar{\mathcal{D}} \setminus  \bar{\mathcal{D}}_{+}^{m}$. The adaptive reference data is selected as $\mathcal{D}' =  \mathcal{D} \setminus \{ \bar{\mathcal{D}}_{\_}^{m}\}^{M}_{m=1}$.

\subsubsection{Influence Function After Training} After Step 2 model training is finished, given adaptive reference data $\mathcal{D}'$, we further examine the contribution of each data instance over the model's performance. A naive method is leave-one-out retraining, which deletes one instance and retrains the model to record the performance difference between two models. Such an approach is clearly infeasible due to its computational demands. Instead, influence functions, stemming from Robust Statistics \cite{huber2004robust} have been provided as an efficient way to estimate how a small perturbation of a training sample would change the model’s predictions \cite{koh2019accuracy, yu2020influence}.   
Let $l(\mathcal{X}_j,\theta)$ be loss on instance $\mathcal{X}_j$. For notation simplification, we omit the user index and use $l_j(\theta)$ to present $l(\mathcal{X}_j,\theta)$. Considering the standard empirical risk minimization (ERM) as the optimization objective, the empirical risk is defined as $L(D; \theta) = \frac{1}{|\mathcal{D}'|} \sum^{|\mathcal{D}'|}_{i=1}l_{i}(\theta)$.

let $\hat{\theta} = \arg \min_{\theta} \frac{1}{|\mathcal{D}'|} \sum^{|\mathcal{D}'|}_{j=1} l_j(\theta)$ be the optimal model parameters. when upweighing a training instance $\mathcal{X}_j$ by an infinitesimal step $\epsilon_j$ on its loss term, we could acquire the new optimal parameters: $\hat{\theta}_{\epsilon_j} = \arg \min_{\theta} \frac{1}{|\mathcal{D}'|} \sum^{|\mathcal{D}'|}_{j=1} l_j(\theta) + \epsilon_j l_j(\theta).$ Based on influence functions \cite{kong2021resolving, koh2017understanding}, we have the following expression to estimate the changes of the model parameters when upweighting $\mathcal{X}_j$ by $\epsilon_j$:

\begin{equation}
\begin{aligned}
\psi_{\theta}(\mathcal{X}_j) & = \frac{\partial \hat{\theta}_{\epsilon_j}}{\partial \epsilon_j} | _{\epsilon_j = 0 } = - H^{-1}_{\hat{\theta}} \triangledown_{\theta} l_{j}(\hat{\theta}) \\
H_{\hat{\theta}} & = \frac{1}{|\mathcal{D}'|} \sum^{|\mathcal{D}'|}_{j=1} \triangledown ^{2}_{\theta} l_j(\hat{\theta}) 
\end{aligned}
\label{equ:upweight}
\end{equation}
where $H_{\hat{\theta}}$ is the Hessian matrix and $\triangledown ^{2}_{\theta} l_j(\hat{\theta})$ is the second derivative of the loss at the training instance $\mathcal{X}_j$ with respect to $\theta$. After applying the chain rule, we are able to estimate the changes in model prediction at the validation instance $\mathcal{X}^{v}_k$:

\begin{equation}
\Psi_{\theta}(\mathcal{X}_j,\mathcal{X}^{v}_k) = \frac{\partial l_k(\hat{\theta}_{\epsilon_j})}{\partial \epsilon_i} | _{\epsilon_j=0} = - \triangledown_{\theta} l_j(\hat{\theta}) H^{-1}_{\hat{\theta}}\triangledown_{\theta}l_j(\hat{\theta})
\label{equ:testset}
\end{equation}

Assume we have a validation set $Q = \{\mathcal{X}^{v}_1, \mathcal{X}^{v}_2,\cdot\cdot\cdot, \mathcal{X}^{v}_{n'}\}$ to evaluate the performance of the model after collaborative learning between neighbors by exchanging soft labels on the reference data. we compute $\mathcal{D}^{'}_{\_} \subseteq\mathcal{D}' $ which contains harmful training samples. A training sample is harmful to the model performance if removing it from
the training set would reduce the test risk over $\mathcal{Q}$. Based on influence functions, we can measure one sample’s influence on test risk without prohibitive leave-one-out retraining. According to Eq. \eqref{equ:upweight} \eqref{equ:testset}, if we add a small perturbation $\epsilon_j$ on the loss term of $\mathcal{X}_j$ to change its weight, the change of test loss at a validation instance $\mathcal{X}_k$ can be estimated as follows:

\begin{equation}
l(\mathcal{X}^{v}_{k}, \hat{\theta_{\epsilon_j}}) - l(\mathcal{X}^{v}_k, \hat{\theta}) \approx \epsilon_j \times \Psi_{\theta}(\mathcal{X}_j,\mathcal{X}^{v}_k)
\label{equ:harm1}
\end{equation}
where $\Psi_{\theta}(\cdot,\cdot)$ is computed by Eq. \eqref{equ:testset}. We then estimate the influence of perturbing $\mathcal{X}_j$ on the whole test risk as follows:
\begin{equation}
l(\mathcal{Q}, \hat{\theta_{\epsilon_j}}) - l(\mathcal{Q},\hat{\theta}) \approx \epsilon_j \times \sum^{n'}_{k=1} \Psi_{\theta}(\mathcal{X}_j, \mathcal{X}^{v}_{k})
\label{equ:harm2}
\end{equation}
Henceforth, we denote by $\Psi_{\theta}(\mathcal{X}_j) = \sum^{n'}_{k=1} \Psi_{\theta}(\mathcal{X}_j, \mathcal{X}^v_{k})$ the influence of perturbing the loss term of $\mathcal{X}_j$ on the test risk over $\mathcal{Q}$. It is worth mentioning that given $\epsilon_j \in [ - \frac{1}{|\mathcal{D}'|}, 0)$, Eq. \eqref{equ:harm2} computes the influence of discarding or downweighting the instance $\mathcal{X}_j$. We denote $\mathcal{D}^{'}_{\_} = \{ \mathcal{X}_{j} \in \mathcal{D}^{'} | \Psi_{\theta}(\mathcal{X}_j) > 0 \}$ as harmful samples. 
In contrast to the leave-one-out strategy which requires $|\mathcal{D}^{'}|$ (i.e., size of reference data) times retraining, DARD offers a one-time selection of optimal adaptive reference data.
Similar to \cite{wang2020less,kong2021resolving}, we assume that each training sample influences the test risk independently. We derive the Lemma 1 and the proof is provided in Appendix \ref{ap:proof}.

\begin{algorithm}
\caption{Optimizing DARD. Processes are implemented on
device side.}
\label{alg:A}
\ForEach{ $u_i\in\mathcal{U}$ \textbf{in parallel}}{
    Receive $\mathcal{G}(u_i), S(u_i), \mathcal{D}(u_i)$ \;

    \tcp{Train under CL paradigm and loss tracking}
    \For{$n=1$ to $N$}{
           Receive soft decisions from $\mathcal{G}(u_i)$, $\mathcal{S}(u_i)$\;
           \For{$m = 1$ to $M$}{
                Fetch $m$-th mini-batch $\Bar{\mathcal{D}}^{m}(u_i)$ from $\mathcal{D}(u_i)$\;
                $\bar{\mathcal{D}}_{+}^{m} = \arg \min_{\bar{\mathcal{D}}:|\bar{\mathcal{D}}| \geq \rho|\bar{\mathcal{D}}|} L(\bar{\mathcal{D}}, \theta)$\;
                $\bar{\mathcal{D}}^m_{\_} = \Bar{\mathcal{D}}^{m} \setminus \bar{\mathcal{D}}_{+}^{m}$\;
                $L_{total} = L_{loc} + \gamma L_{geo}(\bar{\mathcal{D}}_{+}) + \mu L_{sem}(\bar{\mathcal{D}}_{+}) $\;
                
                Update the model: $\phi_i \leftarrow \phi_i - \eta \triangledown L_{total}$\;
            }
            $\mathcal{D}'(u_i) = \mathcal{D}(u_i) \setminus \{\mathcal{D}^{m}_{\_}\}^{M}_{m=1}$\;
    }
    \tcp{Select data with influence function}
    \For{$j = 1$ to $|\mathcal{D}^{'}(u_i)|$}{
         Calculate the influence of the reference data instance $\mathcal{X}_j \in \mathcal{D}'(u_i)$ using  Eq. \eqref{equ:harm2}\;
         \If{$\Psi_\phi(\mathcal{X}_j) \geq \alpha$}{
            $\mathcal{D}^{'}(u_i) = \mathcal{D}^{'}(u_i) \setminus \mathcal{X}_j$\;
         }
         
    }
    $\hat{\mathcal{D}}(u_i) = \mathcal{D}^{'}(u_i)$ \;
        \tcp{Retrain under CL paradigm with $\hat{\mathcal{D}}(u_i)$}
        \For{$n=1$ to $N$}{
           Receive soft decisions from $\mathcal{G}(u_i)$, $\mathcal{S}(u_i)$\;

                $L_{total} = L_{loc} + \gamma L_{geo}(\hat{\mathcal{D}}(u_i)) + \mu L_{sem}(\hat{\mathcal{D}}(u_i)) $\;
                Update the model: $\phi_i \leftarrow \phi_i - \eta \triangledown L_{total}$\;
        
    }
    Output $\hat{\mathcal{D}}(u_i)$ and $\hat{\phi}_i$\;
}
\end{algorithm}

\begin{lemma}
Discarding or downweighting the training samples in $\mathcal{D}^{'}_{\_} = \{ \mathcal{X}_j \in \mathcal{D}^{'} | \Psi_{\theta}(\mathcal{X}_j) > 0 \}$ from $\mathcal{D}^{'}$ could lead to a model with lower test risk over $\mathcal{Q}$:

\begin{equation}
L(\mathcal{Q}, \hat{\theta_{\epsilon}}) - L(\mathcal{Q}, \hat{\theta}) \approx - \frac{1}{m} \sum_{\mathcal{X} \in \mathcal{D}^{'}_{\_}} \Psi_{\theta}(\mathcal{X}_j)
\label{equ:lemma}
\end{equation}
where $\hat{\theta_{\epsilon}}$ denotes optimal model parameters obtained by updating parameters with
discarding or downweighting samples in $\mathcal{D}_{\_}$.

\end{lemma}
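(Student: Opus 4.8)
The plan is to establish the Lemma by summing the single-instance influence estimate of Eq.~\eqref{equ:harm2} over all harmful samples in $\mathcal{D}^{'}_{\_}$, invoking the stated independence assumption to justify that the individual perturbation effects add linearly. First I would fix the perturbation direction: discarding a sample $\mathcal{X}_j$ corresponds to setting $\epsilon_j = -\frac{1}{|\mathcal{D}'|}$, which downweights its loss term so that it contributes zero to the empirical risk, exactly as the ERM objective was set up before Eq.~\eqref{equ:upweight}. Substituting this specific value of $\epsilon_j$ into Eq.~\eqref{equ:harm2} gives the change in test risk from removing a single instance as approximately $-\frac{1}{|\mathcal{D}'|}\,\Psi_{\theta}(\mathcal{X}_j)$.

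Next I would aggregate over the whole harmful set $\mathcal{D}^{'}_{\_}$. Under the independence assumption (following \cite{wang2020less,kong2021resolving}), each training sample is taken to influence the test risk separately, so the total change $L(\mathcal{Q}, \hat{\theta_{\epsilon}}) - L(\mathcal{Q}, \hat{\theta})$ is the sum of the per-instance changes. This yields $-\frac{1}{|\mathcal{D}'|} \sum_{\mathcal{X} \in \mathcal{D}^{'}_{\_}} \Psi_{\theta}(\mathcal{X}_j)$, matching the right-hand side of Eq.~\eqref{equ:lemma} with $m$ identified as $|\mathcal{D}'|$. Because every $\mathcal{X}_j \in \mathcal{D}^{'}_{\_}$ satisfies $\Psi_{\theta}(\mathcal{X}_j) > 0$ by the definition of the harmful set, the entire sum is strictly positive, and therefore the leading negative sign forces $L(\mathcal{Q}, \hat{\theta_{\epsilon}}) - L(\mathcal{Q}, \hat{\theta}) < 0$, i.e.\ the retrained model has strictly lower test risk over $\mathcal{Q}$. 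This is precisely the claim.

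The main obstacle, and the place where the argument is genuinely an approximation rather than an identity, is the linear superposition of influences when removing a \emph{set} of samples rather than a single one. The first-order influence expansion in Eq.~\eqref{equ:harm1} is derived for an infinitesimal upweighting of one instance around the optimum $\hat{\theta}$; extending it to a simultaneous finite perturbation of many instances ignores the higher-order cross-terms in the Taylor expansion of the risk and the fact that removing one sample shifts $\hat{\theta}$, which in turn changes the influence of the remaining samples. I would therefore lean explicitly on the stated independence assumption to neutralize these cross-terms, and I would keep the conclusion at the level of the first-order approximation (hence the ``$\approx$'' in the statement), noting that the Hessian $H_{\hat{\theta}}$ is assumed positive definite at the optimum so that $H^{-1}_{\hat{\theta}}$ in Eq.~\eqref{equ:upweight} is well defined. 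A minor bookkeeping point I would handle is reconciling the notation $m$ in the denominator of Eq.~\eqref{equ:lemma} with $|\mathcal{D}'|$, confirming they denote the same reference-set cardinality so the constant factor is consistent throughout.
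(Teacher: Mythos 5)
Your proof is correct and follows essentially the same route as the paper's: set $\epsilon_j = -\tfrac{1}{m}$ (with $m = |\mathcal{D}'|$) for each harmful instance, apply the per-instance influence estimate of Eq.~\eqref{equ:harm2}, sum over $\mathcal{D}^{'}_{\_}$ under the stated independence assumption, and conclude negativity from $\Psi_{\theta}(\mathcal{X}_j) > 0$. Your explicit discussion of the ignored cross-terms and the role of the independence assumption is in fact more careful than the paper's own write-up, but the argument is the same.
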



Lemma 1 elucidates why identifying harmful instances and selecting beneficial adaptive reference data, denoted as $\hat{\mathcal{D}}^{'}(u_i) = \mathcal{D}^{'}(u_i) \setminus \mathcal{D}^{'}_{\_}(u_i)$ for each user $u_i$, is feasible and essential for DARD. Instead of downweighting the harmful instances, we choose to directly exclude them to save the device budget. 
To account for potential estimation errors in $\Psi_{\theta}(\mathcal{X}_j)$, which could misidentify harmful training samples, we define $\mathcal{D}^{'}_{\_}(u_i) = \{ \mathcal{X}_j \in \mathcal{D} | \Psi_{\theta}(\mathcal{X}_j) > \alpha \}$, where hyper-parameter $\alpha$ is further studied in Appendix \ref{sec:hyper}.

\subsection{The Optimization Method}
We look into the optimization of DARD in this section. The cloud server only participates in the initial stage, where it aggregates the desensitized sequences locally generated by users to form the reference data candidate pool, identify neighbors, and deploy the candidate pool to the individual users. All the processes are executed on the device side as shown in Algorithm \ref{alg:A}. First, users are trained with local data and with neighbors under CL paradigm (lines 3-11), where noisy data is identified by loss tracking. After training, the refined reference data $\mathcal{D}'$ is further examined by influence function to isolate the harmful instances by estimating the performance difference (lines 12-16). At last, the model is retrained under CL paradigm with adaptive reference data $\hat{\mathcal{D}}(u_i)$ to output the personalized recommendation model $\hat{\phi_i}(\cdot)$.

\section{Experiments}

To validate the effectiveness of the proposed method, we perform comprehensive experiments to respond to the following research questions (RQs):
\begin{itemize}
    \item \textbf{RQ1}: How does our proposed method compare against existing centralized and decentralized recommendation methods?
    \item \textbf{RQ2:} How effective is our method, especially in situations where the reference data is limited?
    \item \textbf{RQ3:} How do individual components within our method influence its overall performance?
    \item \textbf{RQ4:} Can the proposed method be incorporated into different CL-based recommendation approaches?
\end{itemize}

\begin{table}[htbp]
 \renewcommand{\arraystretch}{0.85}
  \centering
  \vspace{-1em}
  \caption{The statistics of datasets.}
  \vspace{-1.2em}
\begin{adjustbox}{max width=0.48\textwidth}
\begin{tabular}{c|c|c|c|c|c}
\toprule
      & \#users & \#POIs & \#check-ins & \#check-ins per user & \# categories \\
\midrule
\midrule

Weeplace & 4,560 & 44,194 & 923,600 & 202.54   & 625 \\
Foursquare & 7,507 & 80,962 & 1,214,631 & 161.80   & 436 \\
\bottomrule
\end{tabular}%

\end{adjustbox}
  \label{tab:dataset}%
  \vspace{-15pt}
\end{table}%

\begin{table*}[htbp]
\renewcommand{\arraystretch}{0.85}
  \centering
  \caption{The top-k recommendation performance of DARD and baselines on two datasets. The best results  are marked in \textbf{bold}.}
\begin{tabular}{c|c|cccc|cccc}
\toprule
\multirow{2}[4]{*}{Category} & \multirow{2}[4]{*}{Method} & \multicolumn{4}{c|}{Weeplace} & \multicolumn{4}{c}{Foursqaure} \\
\cmidrule{3-10}      &       & HR@5  & NDCG@5 & HR@10 & NDCG@10 & HR@5  & NDCG@5 & HR@10 & NDCG@10 \\
\midrule
\midrule
\multirow{3}[2]{*}{Centralized Cloud} & MF    & 0.1071  & 0.0734  & 0.1323  & 0.0918  & 0.0842  & 0.0624  & 0.0958  & 0.0675  \\
      & STAN  & 0.3151  & 0.1788  & 0.4570  & 0.2735  & 0.2957  & 0.1710  & 0.4032  & 0.2535  \\
      & LSTM  & 0.2394  & 0.1382  & 0.3209  & 0.1644  & 0.1954  & 0.1226  & 0.2914  & 0.1663  \\
\midrule
\multirow{2}[2]{*}{Centralized On-device} & PREFER & 0.2898  & 0.1801  & 0.3644  & 0.2253  & 0.2848  & 0.1619  & 0.3619  & 0.2162  \\
      & LLRec & 0.2875  & 0.1723  & 0.3615  & 0.2295  & 0.2767  & 0.1404  & 0.3365  & 0.1826  \\
\midrule
\multirow{5}[2]{*}{Decentralized CL } & D-Dist & 0.2490  & 0.1270  & 0.3573  & 0.1939  & 0.2227  & 0.1120  & 0.2874  & 0.1667  \\
      & DCLR  & 0.3281  & 0.1858  & 0.4610  & 0.2708  & 0.3052  & 0.1740  & 0.4291  & 0.2549  \\
      & SQMD  & 0.2913  & 0.1507  & 0.4306  & 0.2171  & 0.2784  & 0.1433  & 0.4053  & 0.2311  \\
      & MAC   & 0.3338  & 0.1889  & 0.4786  & 0.2808  & 0.2967  & 0.1735  & 0.4261  & 0.2606  \\
      & DARD  & \textbf{0.3408 } & \textbf{0.1967 } & \textbf{0.4863 } & \textbf{0.2951 } & \textbf{0.3098 } & \textbf{0.1798 } & \textbf{0.4311 } & \textbf{0.2689 } \\
\bottomrule
\end{tabular}%

  \label{tab:topk}%
  \vspace{-1em}
\end{table*}%

\subsection{Experimental Settings}
\subsubsection{Datasets}
We utilize two widely recognized real-world Location Social Network datasets for the assessment of our proposed DARD: Weeplace \cite{2013Personalized} and Foursquare \cite{2020Will}. Both datasets encompass users' check-in histories in different cities. Following \cite{Li2018NextPR,2018Content}, POIs and users with less than 10 interactions are excluded. The key features of these datasets are presented in Table \ref{tab:dataset}. 
\subsubsection{Evaluation Protocols}
Following \cite{2018Neural,2019Enhancing}, for each check-in sequence, the last check-in POI is for testing, the second last for validation, and the rest for training. Sequences exceeding a length of 200 are truncated to the latest 200 check-ins. In the evaluation phase, instead of evaluating against all  items as in \cite{2020On}, each ground truth is ranked against 200 unvisited POIs, located within the same region. This approach recognizes the location-sensitive nature of POI recommendations; users are unlikely to consecutively visit distant POIs \cite{2022Decentralized}. Recommender produces a ranked list of 201 POIs based on scores, with the ground truth ideally ranking highest. Two ranking metrics are used: Hit Ratio at Rank $k$ (HR@$k$) and Normalized Discounted Cumulative Gain at Rank $k$ (NDCG@$k$) \cite{2007CoFiRank}. While HR@$k$ focuses on the frequency the ground truth appears in the top-$k$ list, NDCG@$k$ emphasizes its high rank.

\subsubsection{Baselines} 
We compared DARD with centralized cloud-based methods ( where the model is deployed on the cloud side), centralized on-device methods (where on-device recommendations are performed and a cloud server is heavily involved), and decentralized CL methods (where the server engages primarily during initialization, followed by user collaboration).

\textbf{Centralized Cloud Recommendation:} \textbf{MF} \cite{2014GeoMF} is a traditional centralized POI system based on matrix factorization. \textbf{LSTM}\cite{1997Long} employs a recurrent neural network to capture the sequential data dependencies. \textbf{STAN} \cite{2021STAN} discerns spatiotemporal correlations in check-in paths using a bi-attention mechanism.

\textbf{Centralized On-device Recommendation:} It refers to methods that deploy a recommendation model on the device, but still heavily rely on a central server. \textbf{LLRec} \cite{2020Next} uses a teacher-student strategy to derive a locally deployable compressed model. \textbf{PREFER} \cite{2021PREFER} as a federated POI recommendation paradigm, uses a could server to gather and aggregates locally optimized models, and redistribute the federated model.

\textbf{Decentralized CL Recommendation:} \textbf{DCLR} \cite{2022Decentralized} facilitates knowledge sharing among similar neighbors by attentive aggregation and mutual information optimization. \textbf{D-Dist} \cite{2020Distributed} allows local models to engage with randomly heterogeneous neighbors, leveraging their soft decisions based on a shared reference dataset. \textbf{SQMD} \cite{2022YE} like D-Dist, operates on a decentralized distillation framework, defining neighbors by their shared reference dataset responses. \textbf{MAC} \cite{long2023model}  adopts a decentralized knowledge distillation framework, pruning non-essential neighbors during training.

\subsubsection{Hyper-parameters Setting}

Following \cite{2022Decentralized, long2023model} we utilize STAN as the base model for DARD. For general parameters in decentralized CL recommendations, the number of neighbors is defined as 50, $\gamma$ is 0.5, and $\mu$ is 0.7. We set the dimension to 64, learning rate $\eta$ to 0.002, dropout to 0.2, batch size M to 16, and training epoch $N$ to 50 for all methods. As the key hyper-parameters for data selection in DARD, $\alpha$ and $\rho$ are set as 0.001 and 0.8. The experiment for hyper-parameters is shown in Appendix \ref{sec:hyper}. Furthermore, 5\% users are randomly selected to generate desensitized check-in data with the same amount of local private data and upload it to the cloud for candidate pool generation. Experimental results are executed five times and averaged.

\begin{figure*}
\begin{minipage}[t]{0.50\columnwidth}
  \includegraphics[width=\linewidth]{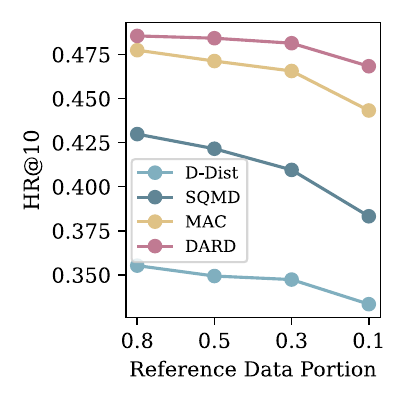}
  \vspace{-20pt}
  \caption{The performance with different amounts of reference data on Weeplace.}
    \label{fig:referencedataamount}
\end{minipage}\hfill
\begin{minipage}[t]{0.9\columnwidth}
\includegraphics[width=\linewidth]{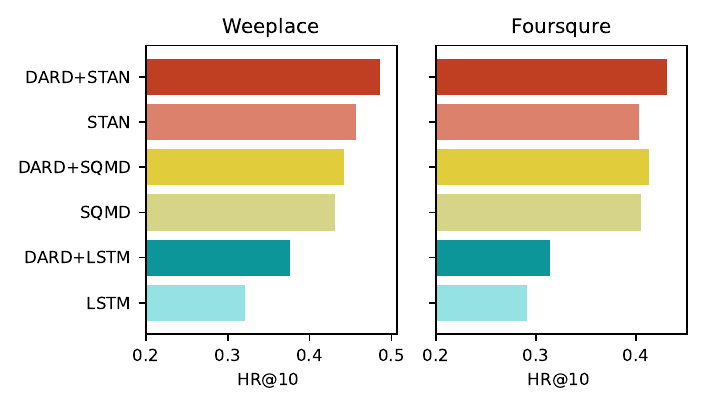} 
\vspace{-1em}
\caption{Performance of DARD integrated with different Recommendation Models.}
\vspace{-20pt}
\label{fig:agnostic}
\end{minipage}\hfill
\begin{minipage}[t]{0.50\columnwidth}
  \includegraphics[width=\linewidth]{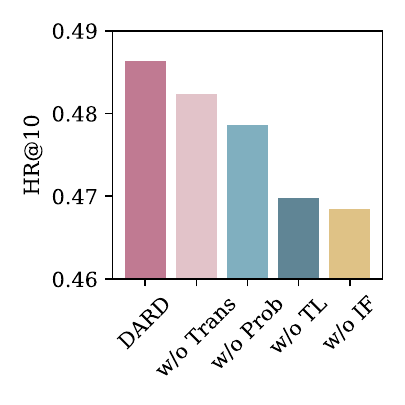}
  \vspace{-20pt}
  \caption{Result of ablation experiment on different
parts of DARD on Weeplace.}
    \label{fig:ablation}
\end{minipage}\hfill
\end{figure*}

\subsection{Top-k Recommendation (RQ1)}

To validate the effectiveness of DARD, we compare it with
different categories of baselines on the Top-k recommendation task. Table \ref{tab:topk} presents recommendation performance results, and our analysis yields several key observations. 

Within the centralized POI recommenders category, STAN exhibits superior accuracy compared to LSTM and MF. Despite this, DARD surpasses STAN's performance. A potential reason for STAN's limitations might be its training on check-ins spanning multiple cities. This approach risks integrating knowledge from one region that may be irrelevant or harmful to recommendations in another, thereby reducing STAN's efficacy. Furthermore, DARD consistently achieves better results than centralized on-device methods, where devices rely on the server for the entire time. Instead of collaborating with all users through the server, DARD implements collaboration between similar neighbors to enhance personalization. In contrast, centralized models with a faint cloud model tend to accommodate the majority's preferences, neglecting diverse user interests. Finally, when compared to decentralized CL methods that use a reference dataset, such as D-Dist, SQMD, and MAC, DARD retains its superiority. This can be attributed to DARD's innovative use of an adaptive reference dataset, which better supports collaborative user learning and more effective knowledge exchange.

\subsection{Limited Reference Data (RQ2)}

To investigate the performance of various decentralized CL methods under limited amounts of reference data, the proportion of data was systematically reduced from 0.8 to 0.1 on the Weeplace dataset. 0.8 indicates that 80\% of the reference data from the cloud candidate pool is transferred to the device. The performance is evaluated with HR@10 on the Weeplace dataset, as shown in Figure \ref{fig:referencedataamount}.

The diminishing performance of most CL-based methods with reduced reference data underscores the critical role of this data in recommendation quality. In contrast, our proposed DARD method shows resilience, with only minor performance drops from 0.8 to 0.3, highlighting its effective reference data selection capability. This suggests that DARD efficiently identifies key instances. Therefore, reducing the on-device reference data doesn't drastically impact knowledge exchange quality. However, DARD's performance does decline when the data portion is cut from 0.3 to 0.1. A possible explanation is that the 0.1 threshold may be too restrictive to encompass all essential instances in the reference data, causing the exclusion of some valuable instances and a subsequent drop in performance. Notably, DARD with reduced reference data (e.g., 0.3) still outperforms other methods using larger data sets. DARD not only yields superior results but also efficiently manages the communication load, given that the communication cost is intrinsically linked to the volume of reference data whose soft decisions are exchanged.

\subsection{Ablation Study (RQ3)}
In this section, we aim to demonstrate the effect of transformation generation, probability generation, data selection via training loss, and data selection via influence function. As shown in Figure \ref{fig:ablation}, we implement DARD without one component, while keeping other components unchanged. we conduct experiments on Weeplace while similar trends are observed with datasets.

$w/o \ Trans$ discards the transformation generation method and generates check-in sequences by randomly changing the items on the private reference data to preserve privacy. The performance decreases because randomly changing items cannot generate a meaningful and accurate reference data candidate pool.

$w/o \ Prob$ discards the probability method and generates category sequences randomly. Similarly, random generation cannot guarantee a proper candidate pool.

$w/o \ TL$ deletes the process of identifying noisy reference data instances during the training, which unavoidably introduces excessive noisy data, and impedes the effectiveness of the influence function after convergence. 

$w/o \ IF$ ignores the influence function process for selection. One possible explanation for the performance decrease is that using training loss alone might not be sufficient for data selection. 


\subsection{Model-agnostic Study (RQ4)}

this section investigates whether the proposed DARD framework can seamlessly integrate with other conventional recommendation models (e.g., LSTM) and decentralized CL models (e.g., SQMD), as shown in Figure \ref{fig:agnostic}.

The model-agnostic nature of DARD permits its compatibility with classical recommendation models, serving as the on-device model for users. This compatibility is attributed to the reference data mechanism that facilitates knowledge exchange even between heterogeneous models. Both DARD+LSTM and DARD+STAN exhibit superior performance compared to their standalone counterparts, validating the efficacy of the proposed DARD framework.
    
DARD can be seamlessly incorporated into other decentralized CL approaches with reference data. The performance boost in DARD+SQMD, when contrasted with traditional CL methodologies, underscores the importance of adaptive reference data selection in enhancing knowledge exchange.

\section{Conclusion}

In this paper, we proposed the Decentralized Collaborative Learning with Adaptive Reference Data (DARD) framework, which allows for adaptive reference data, enhancing user collaboration and ensuring effective knowledge exchange. DARD first establishes a comprehensive yet desensitized public reference data pool, followed by collaborative pretraining and an adaptive selection of user-specific reference data, grounded in monitoring training loss and leveraging influence functions. Loss tracking identifies the noisy instances during the training, and the influence function identifies the harmful instances by estimating the difference in the model performance with and without the instances. Extensive experiments spotlight DARD's commendable performance in recommendations and its adeptness at addressing the limited reference data available.

\section*{Acknowledgement}
This work is supported by the Australian Research Council under the streams of Future Fellowship (Grant No. FT210100624), Discovery Early Career Researcher Award (Grants No. DE230101033), Discovery Project (Grants No. DP240101108, and No. DP240101814), and Industrial Transformation Training Centre (Grant No. IC200100022).

\bibliographystyle{ACM-Reference-Format}
\bibliography{main}

\appendix

\section{Preliminary Experimental Settings}
\label{ap:pre}
To investigate whether the selection of reference data will influence the recommendation performance of a decentralized CL model. We utilize the general CL paradigm introduced in Section \ref{sec:cl} on the Foursquare dataset \cite{2020Will}.  The key characteristics of the dataset are presented in Table \ref{tab:dataset}.
\subsection{Base Model and Hyper-parameters}
We exploit the frequently used POI recommendation model STAN \cite{2021STAN} as the on-device model for all users. For the base model, we follow the authors' suggestion to set the latent dimension to 50. For the CL paradigm, we follow \cite{long2023model} to set the neighbor number to 50, learning rate to 0.002, dropout to 0.2, batch size to 16, and training epoch to 50.

\subsection{Different Selection Strategies}
To simulate the real-world situation where users are unlikely willing to share their sensitive private check-in data to the cloud or others. We utilize two data generation methods introduced in Section \ref{sec:data} on 1\% users of the total dataset, locally on their own device, to generate the reference data candidate pool. Different reference data selection strategies are implemented, and the performance is evaluated by Hit Ratio@5:

\begin{itemize}
    \item \textbf{Original}: The original candidate pool is taken as the reference data for all users. 
    \item \textbf{Random}: The same amount of data instances are randomly selected from the candidate pool for every user collaboration. 
    \item \textbf{Popular}: The same amount of data instances are selected from the candidate pool for every user collaboration based on the popularity of the POIs. More popular items have a higher chance of being chosen as the reference data.
    \item \textbf{Adaptive}: The proposed method DARD is utilized to select adaptive reference data for each user. 
\end{itemize}

\section{PROOF OF LEMMA}
\label{ap:proof}
\begin{lemma}
Discarding or downweighting the training samples in $\mathcal{D}_{\_} = \{ \mathcal{X}_j \in \mathcal{D} | \Psi_{\theta}(\mathcal{X}_j) > 0 \}$ from $\mathcal{D}$ could lead to a model with lower test risk over $\mathcal{Q}$:

\begin{equation}
L(\mathcal{Q}, \hat{\theta_{\epsilon}}) - L(\mathcal{Q}, \hat{\theta}) \approx - \frac{1}{m} \sum_{\mathcal{X} \in D_{\_}} \Psi_{\theta}(\mathcal{X}_j)
\end{equation}
where $\hat{\theta_{\epsilon}}$ denotes the optimal model parameters obtained by updating the model’s parameters with
discarding or downweighting samples in $\mathcal{D}_{\_}$.

\end{lemma}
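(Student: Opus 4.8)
The plan is to reduce the group-deletion statement to the single-instance influence estimate already established in Eq.~\eqref{equ:harm2}, and then aggregate over $\mathcal{D}_{\_}$ using the independence assumption stated just before the lemma. The only genuinely new ingredients beyond Eq.~\eqref{equ:harm2} are (i) choosing the perturbation $\epsilon_j$ that corresponds \emph{exactly} to deleting a sample rather than infinitesimally reweighting it, and (ii) justifying that the joint effect of deleting all of $\mathcal{D}_{\_}$ is the sum of the individual effects.

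First I would fix the deletion weight. Since the empirical risk is the average $\frac{1}{m}\sum_{i=1}^{m} l_i(\theta)$ with $m = |\mathcal{D}|$, removing a single instance $\mathcal{X}_j$ replaces this objective by $\frac{1}{m}\sum_{i\neq j} l_i(\theta) = \frac{1}{m}\sum_{i=1}^{m} l_i(\theta) - \frac{1}{m} l_j(\theta)$, which is precisely the upweighted objective defining $\hat{\theta}_{\epsilon_j}$ with $\epsilon_j = -\tfrac{1}{m}$ (this is the deletion endpoint $\epsilon_j \in [-\tfrac{1}{m},0)$ already noted in the text). Substituting this value into Eq.~\eqref{equ:harm2} gives the per-instance test-risk change
\begin{equation}
L(\mathcal{Q}, \hat{\theta}_{\epsilon_j}) - L(\mathcal{Q}, \hat{\theta}) \approx -\frac{1}{m}\,\Psi_{\theta}(\mathcal{X}_j),
\end{equation}
the basic building block of the argument.

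Next I would aggregate. Writing $\epsilon = (\epsilon_j)$ with $\epsilon_j = -\tfrac{1}{m}$ for every $\mathcal{X}_j \in \mathcal{D}_{\_}$ and $0$ otherwise, a first-order Taylor expansion of $L(\mathcal{Q}, \hat{\theta}_{\epsilon})$ in $\epsilon$ around $\epsilon=0$ yields $\sum_{\mathcal{X}_j \in \mathcal{D}_{\_}} \epsilon_j\,\Psi_{\theta}(\mathcal{X}_j)$, precisely because the assumption that each training sample influences the test risk independently (following \cite{wang2020less,kong2021resolving}) discards the cross-terms arising from simultaneously deleting several samples. Plugging in $\epsilon_j = -\tfrac{1}{m}$ reproduces the claimed identity
\begin{equation}
L(\mathcal{Q}, \hat{\theta}_{\epsilon}) - L(\mathcal{Q}, \hat{\theta}) \approx - \frac{1}{m} \sum_{\mathcal{X} \in \mathcal{D}_{\_}} \Psi_{\theta}(\mathcal{X}_j).
\end{equation}
The ``lower test risk'' conclusion then follows immediately: by the definition $\mathcal{D}_{\_} = \{\mathcal{X}_j \in \mathcal{D} \mid \Psi_{\theta}(\mathcal{X}_j) > 0\}$ every summand is strictly positive, so the right-hand side is negative.

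The step I expect to be the main obstacle is the aggregation, not the single-instance estimate. Deleting several instances simultaneously re-optimizes the minimizer, and the true change equals the sum of single-instance influences only up to second-order interaction terms, since the Hessian $H_{\hat{\theta}}$ and the gradients $\nabla_{\theta} l_j(\hat{\theta})$ are all frozen at the original minimizer $\hat{\theta}$. Making this exact would require a bound on the group-influence remainder; instead I would state clearly that the ``$\approx$'' is a first-order approximation that is justified by the independence assumption invoked above, and would not attempt an exact equality.
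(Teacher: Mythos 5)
Your proposal follows essentially the same route as the paper's own proof: it identifies deletion with the perturbation $\epsilon_j = -\tfrac{1}{m}$, applies the per-instance estimate of Eq.~\eqref{equ:harm2}, aggregates over $\mathcal{D}_{\_}$ under the stated independence assumption, and concludes negativity from $\Psi_{\theta}(\mathcal{X}_j) > 0$. If anything, your write-up is more careful than the paper's, since you explicitly flag that the aggregation step discards second-order interaction terms and is only a first-order approximation, a caveat the paper leaves implicit.
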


\begin{proof}
Recall that $\hat{\theta} = \arg \min_{\theta} \frac{1}{m} \sum^{m}_{j=1} l_j(\theta)$. In this way, downweighting the training
sample $\mathcal{X}_j$ in $\mathcal{D}_{\_}$ means setting $\epsilon_j = [-\frac{1}{m},0)$ (Noticed that $\epsilon_i = -\frac{1}{m}$ means discarding training sample $\mathcal{X}_j$). For convenience of analysis, we set all $\epsilon_j$ equal to $-\frac{1}{m}$ and have $\Psi_{\theta}(\mathcal{X}_j) \triangleq \sum^{m'}_{k=1} \Psi_{\theta} (\mathcal{X}_j, \mathcal{X}^{c}_k)$. According to Eq. \eqref{equ:harm2}, we can estimate how the test risk is changed by discarding or downweighting $\mathcal{X}_j \in \mathcal{D}_{\_}$ as follows:
\begin{equation}
\begin{aligned}
L(\mathcal{Q}, \hat{\theta_{\epsilon}})& = \sum_{\mathcal{X}_j \in \mathcal{D}_{\_}} \sum^{m'}_{j=1}l(\mathcal{X}^{c}_{k},\hat{\theta}_{\epsilon_j}) - l(\mathcal{X}_k^{c}, \hat{\theta}) \\ & \approx \sum_{\mathcal{X}_j \in \mathcal{D}_{\_}} \epsilon_{j} \times \sum^{m'}_{j=1} \Psi_{\theta}(\mathcal{X}_i, \mathcal{X}^{c}_k) \\ & = -\frac{1}{m} \sum_{\mathcal{X}_j \in \mathcal{D}_{\_}} \Psi_{\theta}(\mathcal{X}_j) \leq 0
\end{aligned}
\end{equation}
\end{proof}


\section{Hyper-parameter Study}
\label{sec:hyper}

To investigate the effects of the key hyper-parameters for adaptive reference data selection, we examine the selection ratio for tracking training loss, $\rho $, ranging between  $\{0.1, 0.2, 0.3, 0.4, 0.5 \} $ and influence function parameter, $\alpha$ in the set $\{0.01, 0.005, 0.001, 0.0005, 0.0001\} $. We adjusted each   $\rho$ and $\alpha$ value individually, maintaining another hyper-parameter constant, and documented the recommendation outcomes by HR@10, depicted in Figure \ref{fig:hyper}.

\textbf{Impact of $\rho$}. A larger $\rho$ indicates retaining more instances in the reference data pool. If the $\rho$ value is too large, it does not serve as a selection procedure and might include more noisy instances hampering the influence function's performance post-convergence. A $\rho$ range of 0.8 to 0.6 delivers relatively satisfactory results. However, when $\rho $ is overly small, fewer instances are utilized for both model training and influence function selection, which may reduce the performance.

\textbf{Impact of  $\alpha$ }. A smaller $\alpha $ means that more instances are identified as harmful by the influence function. Optimal results are achieved when $\alpha$  lies between 0.01 and 0.001, as more harmful instances are excluded. However, as $\alpha$ approaches 0.0001, the performance starts to decline. A potential reason might be the misclassification of beneficial instances as harmful due to estimation errors from the influence function.

\begin{figure}[H]
\centering
\includegraphics[width=0.45
\textwidth]{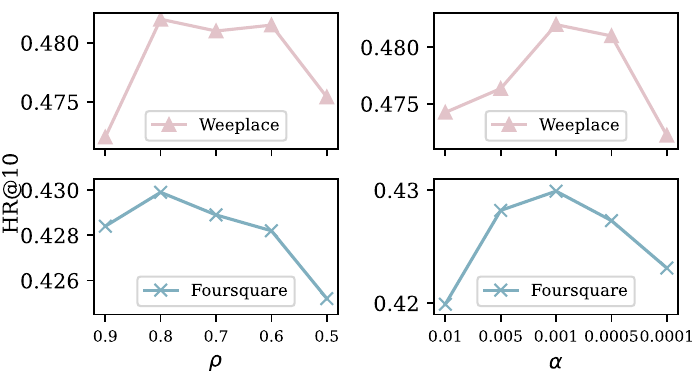} 
\vspace{-1em}
\caption{The performance of DARD with loss tracking parameter $\rho$ and influence function selection parameter $\alpha$.}
\label{fig:hyper}
\vspace{-1.5em}
\end{figure}

\end{document}